\newif\ifaa
\def\llncs
\newcommand{\nomore}[1]{\ensuremath{\le #1}}
\newcommand{\defeq}{\ensuremath{ := }}
\title{The Polynomial Hierarchy does not collapse}
\titlerunning{PH does not collapse}
\keywords{Complexity classes, Polynomial Hierarchy, Rice's Theorem, P versus NP}
\author{Reiner Czerwinski}
\institute{TU Berlin(Alumnus)}
\author{Reiner Czerwinski\\TU Berlin (Alumnus)\\reiner.czerwinski@posteo.de}
\titlerunning{PH does not collapse}
\newcommand{\B}{\ensuremath{\{0,1\}}}
\begin{document}
\begin{headdata}
  \begin{abstract}

      The arithmetical hierarchy ($\AH$) is similar to the polynomial hierarchy ($\PH$).
      Unlike the $\PH$, the $\AH$ does not collapse relative to any oracle.
      A language in the $(k+1)$-st level of the $\AH$ is computable enumerable (c.e.) relative to the
      $k$th level. So, given an oracle in the $k$th level of the $\AH$, we could
      use a black-box search to decide whether the input word is in the language.

      With very large padding arguments, i.e.\ the paddings grow faster
      than any relative to the level $k$ of the $\AH$ computable function,
      we would construct a language contained in the $k+1$ level of $\PH$,
      if we use only a finite set of input words.
      From the oracle in $\AH$, we would construct an analogue oracle at the
      $k$th level of $\PH$.

      For the input words of the finite set, a word is in the language of $\AH$,
      if and only if it is in the language of $\PH$.
      And the input word is in the oracle set of $\AH$, if and only if it is in the oracle
      of $\PH$.
      As in the language of $\AH$, we must apply a black-box search in the
      language of $\PH$.
      So, we would also have exponentially many
      oracle queries in the language of $\PH$. The $\PH$ does not collapse.
\ifllncs
\keywords{polynomial hierarchy \and aritmetical hierarchy \and complexity classes \and black-box complexity \and Turing degrees}
\fi
    \end{abstract}
\end{headdata}
\section{Introduction}
The polynomial hierarchy (\PH) is similar to the arithmetical hierarchy (\AH).
A language
in the $\AH$ is defined by additional variables, which are bounded by alternating existential and
universal quantifiers. These variables are connected with a computational relation.
The number of alternating quantifiers denotes the level in the hierarchy. 
In the $\PH$ the runtime and the lengths of the quantified variables are bounded
by a polynomial of the input length.

The $\AH$ does not collapse.\
According to the
jump theorem~\cite[page 53, Th. 2.3.]{soare1987recursively}, the $\AH$ does not collapse relative to any oracle. 
A proof for the $\PH$ must circumvent the relativization barrier.
There are oracles where the $\PH$ does not collapse relative to them~\cite{Yao1985SeparatingTP}. The $\PH$ does not collapse relative to random oracles~\cite{Hstad2017AnAD}.
A good overview can be found in~\cite{Rossman2015ComplexityTC}.
On the other hand, 
 for every $k\in\nat$, there is an oracle 
for which the polynomial hierarchy collapses on exactly the $k$th level relative to it~\cite{ko88}.



In Section~\ref{ahph}, we recall the definition of the $\AH$ and the $\PH$
and some properties of the $\PH$.

If we had an oracle $A$ with $\Sigma_1\le_T A$, and a language c.e.\ in $A$, then
we could analyze the number of oracle queries with black-box complexity.
We prove this in Section~\ref{se:blackbox} when the oracle is in $\Pi_{k}.$

In Section~\ref{sec:AHPH},  we 
obtain a connection from the $\AH$ to the $\PH$
via very large padding arguments.

In Section~\ref{nocollapse}, we combine 
the padding technique
of Section~\ref{sec:AHPH} with the black-box complexity results
of Section~\ref{se:blackbox}.
We conclude that the $\PH$ does not collapse.

In section \ref{se:rela} we argue
\ why the proof method does not relativize.
\nocite{rogers1967theory}
\newcommand{\rlf}{\ensuremath{\tilde{R}_{(L,k)}}}
\newcommand{\defiff}{\ensuremath{\overset{\mathrm{def}}{\iff}}}
\newcommand{\defgl}{\ensuremath{\overset{\mathrm{def}}{=}}}
\section{Notations and Prerequisites}\label{ahph}
\subsection{Languages}
A language is a set of words. A word is a finite sequence of symbols.
Without loss of generality we can choose $\B$ as the set of symbols. So, the words are in $\B^*$.
The complement of a language $L$ is $\overline{L}=\bigl(\B^*\setminus L\bigr)$.

\subsection{Relations}
If $A_1, \dots, A_n$ are sets,
then $R \subseteq A_1 \times \cdots \times A_n$ is an $n$-ary relation.
If the tuple $(a_1, \dots, a_n) \in  A_1 \times \cdots \times A_n $
is in $R$, we denote $R(a_1, \dots, a_n)$.
We denote $\lnot R(a_1, \dots, a_n)$, if the tuple is not in the relation $R$.

A relation is computable if it is decidable with a Turing machine, whether
a tuple is in the relation or not.
An $n$-ary relation $R$ is c.e., if there exists a sequence $(R_t)$ of $n$-ary computable relations with
\[R(x_1, \dots , x_n) \iff \exists t_0\;\forall t\ge t_0 : R_t(x_1, \dots , x_n) \]
and
\[\lnot R(x_1, \dots , x_n) \iff \forall t : \lnot R_t(x_1, \dots , x_n) .\]

A relation $R$ is co-c.e.\ if $\lnot R$ is c.e.

Computable relations are more difficult to handle than c.e.\ relations.
A relation is c.e., if there exists a Turing machine which accepts the coding
of the tuple if and only if the tuple is in the relation. Finding a computable
TM is much harder.

\newcommand{\inb}[1][\ast]{\ensuremath{\in\B^{#1}}}
\subsection{Arithmetical Hierarchy}
A language $L$ is in $\Sigma_k$, if there exists a computable relation
$R_L$ with
\begin{equation}\label{defAH}
  \begin{split}x \in L \iff
  \exists y_1\in\B^*\; \forall y_2\in\B^* \dots Q_k y_k\in\B^* :\\
  R_L(x,y_1,y_2,\dots,y_k)
\end{split}
\end{equation}
\[
\text{where }Q_i =\begin{cases} \exists \text{ if }i\text{ is odd} \\ \forall \text{ if }i\text{ is even.} \end{cases}
\]

  The class $\Pi_k$ is defined similarly to $\Sigma_k$, except, that
  the first of the alternating quantifiers is universal and not existential.
  A language $L$ is in $\Pi_k$, if $\overline{L}\in\Sigma_k$.

  The Arithmetical Hierarchy is defined as
  \[
\AH = \bigcup_{k\in\nat} \Sigma_k =\bigcup_{k\in\nat} \Pi_k .   
\]

\begin{remark}
In the literature, the variables in the definition are often natural numbers.
This definition of $\AH$ is equivalent, since
there is a bijective computable function from $\nat$ to $\B^*$:
\begin{equation} \label{eq:bin}
\rho(n) = b_{k-1}b_{k-2}\dots b_0
\text{ where } n+1 = \sum_{j=0}^k b_j*2^j\text{ and }b_k=1.
\end{equation}
\end{remark}

\begin{lemma}\label{ce-rel}
  Let the language $L$ be defined with a c.e.\ or co-c.e.\ relation $R$, such that
  \[ L =\{ x \mid Q_1y_1\inb \;Q_2y_2\inb  \dots Q_{k}y_k\inb \; R(x,y_1,y_2,\dots,y_k) \}, \]
  where $Q_1,Q_2,\dots,Q_k$ are alternating existential and universal
  quantifiers with
  \[Q_K =
    \begin{cases}
      \exists & \text{ if }R\text{ is c.e.}\\
      \forall &\text{ if }R\text{ is co-c.e.}\\
    \end{cases}
  \]
  
  If 
  $Q_1$ is existential (universal), then
  $L$ is in $\Sigma_k$($\Pi_k$).
\end{lemma}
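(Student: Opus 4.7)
The plan is to absorb the existential (respectively universal) witness that turns $R$ from computable into c.e.\ (respectively co-c.e.) into the innermost quantifier $Q_k$, which by hypothesis has the matching type, thereby reducing the formula defining $L$ to the computable-relation form~(\ref{defAH}).

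First I would extract a computable ``skeleton'' of $R$. From the two equivalences in the paper's definition of a c.e.\ relation, the formulas $\exists t : R_t(x_1,\dots,x_n)$ and $R(x_1,\dots,x_n)$ are logically equivalent, because $\neg\exists t : R_t$ unfolds via the second equivalence into $\forall t : \neg R_t$, which is $\neg R$. Thus setting $S(x_1,\dots,x_n,t) \defiff R_t(x_1,\dots,x_n)$ yields a computable $S$ with
\[ R(x_1,\dots,x_n) \iff \exists t\inb : S(x_1,\dots,x_n,t), \]
where I have transported $t$ from $\nat$ to $\B^*$ via the bijection~(\ref{eq:bin}). The co-c.e.\ case is dual and yields a computable $S$ with $R \iff \forall t\inb : S$.

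Next I would merge two adjacent like-type quantifiers. Fix a computable bijection $\langle\cdot,\cdot\rangle : \B^*\times\B^* \to \B^*$ with computable projections $\pi_1,\pi_2$, for instance by composing Cantor's pairing function on $\nat$ with $\rho$. In the c.e.\ case, $Q_k=\exists$ by hypothesis, and substituting the characterization above rewrites $L$ as
\[ L = \{ x \mid Q_1 y_1\inb \dots Q_{k-1}y_{k-1}\inb\; \exists y_k\inb\; \exists t\inb : S(x,y_1,\dots,y_k,t) \}. \]
Defining the computable relation $R_L(x,y_1,\dots,y_{k-1},z) \defiff S(x,y_1,\dots,y_{k-1},\pi_1(z),\pi_2(z))$ collapses the two inner $\exists$'s into a single $\exists z$, leaving exactly $k$ alternating quantifiers over a computable relation. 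By~(\ref{defAH}) this places $L$ in $\Sigma_k$ when $Q_1=\exists$ and in $\Pi_k$ when $Q_1=\forall$. The co-c.e.\ case is entirely symmetric, merging $\forall y_k\forall t$ into a single $\forall z$.

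No step is really an obstacle; the content of the lemma is the bookkeeping observation that the type hypothesis on $Q_k$ is precisely what allows the $\exists t$ (or $\forall t$) introduced by the c.e.\ (or co-c.e.) characterization to be absorbed without breaking the alternation pattern or increasing the quantifier count~$k$.
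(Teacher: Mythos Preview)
Your proof is correct and follows essentially the same approach as the paper: extract the computable approximation $R_t$, then absorb the extra $\exists t$ (or $\forall t$) into the innermost quantifier $Q_k$ of matching type via pairing. The only cosmetic difference is that the paper handles the co-c.e.\ case by passing to $\overline{L}$ and invoking the c.e.\ case, whereas you argue the dual directly; both are equally valid and equally short.
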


\begin{proof}
  \begin{enumerate}[label=(\roman*)]
    \item
  If $R$ is a c.e.\ relation and $Q_k$ is existential, then there exists a sequence of computable
  relations $R_t$
  with \[R(x,y_1,\dots, y_k) \iff \exists t : R_t(x,y_1,\dots, y_k). \]
  So, \begin{eqnarray*}L &=&\{x \mid Q_1 y_1\inb\;
  Q_2 y_2\inb\;\dots \exists y_k\inb\;\exists t\in \nat :
  \\ && R_t(x, y_1, y_2, \dots, y_k)\} \\
  &=&\{x \mid Q_1 y_1\inb\;
Q_2 y_2\in\B^*\;\dots \exists (y_k,\rho(t))\in (\B^*)^2 :\\
                           &&R_t(x, y_1, y_2, \dots, y_k)\}.
      \end{eqnarray*}
      If the quantifier $Q_1$ is existential, then
      $L$ fulfils the property in Equation~(\ref{defAH}).
      In this case $L\in \Sigma_k$.
      If the quantifier $Q_1$ is universal, then $L_R\in \Pi_k$.
    \item
      If $R$ is co-c.e.\ and $Q_k$ is universal, then
  \[ \overline{L} =\{ x \mid \overline{Q}_1y_1\inb\;\overline{Q}_2y_2\inb \dots \overline{Q}_{k}y_k\inb\;\lnot R(x,y_1,y_2,\dots,y_k) \} \]
  with
  \[ \overline{Q} \defeq \begin{cases} \forall & \text{ if }Q=\exists\\
    \exists & \text{ if }Q=\forall\text{.}
  \end{cases}\]
 As proven in (i), $\overline{L}$ is in $\Pi_k$($\Sigma_k$),
which implies, that $L$ is in $\Sigma_k$($\Pi_k$).
    \end{enumerate}
  \end{proof}



  If $L\in\Sigma_k$ with a computable relation $R_L$, that satisfies property
  (\ref{defAH}), we define the relation for $k \ge 1$:
  \begin{equation} \label{defrlf}
   \rlf (x,y) \defiff \forall y_2\B^* \dots Q_k y_k\in\B^* : R_L(x,y,y_2,\dots,y_k)\text{.}
  \end{equation}
  The set $\{ (x, y) \in (\B^*)^2 \mid \rlf(x,y)\}$ is in $\Pi_{k-1}$.
  So the set is computable in $\Sigma_{k-1}$.
  \begin{theorem}\label{limit}
    Let $k$ be a positive integer, and let $L$ be a language in $\Sigma_k$.
    If $\rlf$ is the binary relation in $\Pi_{k-1}$
    with $L=\{ x \mid \exists y\in\B^* \rlf(x,y)  \}$
    as
    defined in Equation \ref{defrlf},
    and $X$ is a finite set of words,
    then there exists an $M$
    with \[
     \text{ for all }x\in X  : x\in L \iff \exists y\in\B^M \rlf (x,y)\text{.}
    \]
    We denote $M=\eta(X,k,L)$.
  \end{theorem}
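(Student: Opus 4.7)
The plan is essentially a compactness argument on the finite set $X$. Because $X$ is finite, so is $X\cap L$. For every $x\in X\cap L$ the defining equivalence $x\in L \iff \exists y\in\B^*\,\rlf(x,y)$ supplies at least one witness; I pick one such witness $y_x$ for each $x\in X\cap L$. Since only finitely many witnesses have been chosen, the maximum length
\[
M \defeq \max\bigl\{|y_x| : x\in X\cap L\bigr\}
\]
(with $M := 0$ if $X\cap L$ is empty) is a well-defined finite integer; this is the value $M = \eta(X,k,L)$ claimed by the theorem.

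Verifying the biconditional is then straightforward. For $x\in X$ with $x\in L$, the witness $y_x$ has length at most $M$, so $\exists y\in\B^{\le M}\,\rlf(x,y)$. Conversely, if some $y\in\B^{\le M}$ satisfies $\rlf(x,y)$, then in particular some $y\in\B^*$ does, and hence $x\in L$. Both sides are false for $x\in X\setminus L$, so the equivalence holds for every $x\in X$.

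The one point requiring a moment's care is the literal reading of ``$\B^M$'' as strings of length exactly $M$: then a shorter witness must be paddable to length exactly $M$, which is not automatic for an arbitrary computable $R_L$. The simplest remedy is to replace $R_L$ once and for all by the computable relation $R_L'$ that first strips any trailing run of $0$s from its leading witness argument and then applies $R_L$: this defines the same $L\in\Sigma_k$ but renders $\rlf$ closed under the padding $y\mapsto y 0^{M-|y|}$, so any witness of length $\le M$ extends to one of length exactly $M$. With this mild modification (or, equivalently, by reading ``$\B^M$'' as ``$\B^{\le M}$'', which is what the downstream black-box arguments actually use), the argument goes through, and I expect no further obstacle since all the complexity of $L$ is hidden inside the $\Pi_{k-1}$-relation $\rlf$, which is used here only as a black box.
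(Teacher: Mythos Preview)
Your argument is correct and is essentially identical to the paper's: the paper likewise chooses, for each $x\in X\cap L$, a bound $m_x$ (in effect the length of a witness), sets $M=\max\bigl(\{m_x:x\in X\cap L\}\cup\{0\}\bigr)$, and verifies the biconditional exactly as you do. Your additional remark about the discrepancy between $\B^{M}$ in the statement and $\B^{\le M}$ in the actual conclusion is well taken; the paper's own proof in fact ends with $\B^{\le M}$, and the downstream applications use that reading.
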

  \begin{proof}
    Let $X = \{x_1, \dots ,x_n\}$ be a finite set of words. 
    If $L$ is in  $\Sigma_k$, and $\rlf$ is a relation in $\Pi_{k-1}$
    that satisfies the property in Equation \ref{defrlf},
    then for every $x \in X$ there exists an $m_x\in\nat$ with
    \[m \ge m_x \Rightarrow [ x\in L \iff \exists y\in\B^{\le m} \rlf(x,y) ]\text{.} \]
    Now we define
    \[M = \eta(X,k,L) \defeq \max\bigl(\{ m_x \mid x\in X \land \ x\in L    \} \cup \{0\} \bigr) \text{.}\]
    So,
    \[
     \text{ for all }x\in X  : x\in L \iff \exists y\inb[\le M] \rlf (x,y)\text{.}
    \]
  \end{proof}
  \subsubsection{Properties of $\AH$}
  According to Kleene~\cite{KleeneAH} and Mostowski~\cite{Mostowski1947} we know that:
  \begin{itemize}
  \item $L\in \Sigma_k \iff \overline{L}\in \Pi_k$\\
  \item $\Sigma_k \setminus \Pi_k \not=\emptyset$ and $\Pi_k \setminus \Sigma_k \not=\emptyset$. 
  \end{itemize}
  A language is in $\Sigma_{k+1}$, iff it is c.e.\ in $\Sigma_k$ and $\Pi_k$
  by Post's Theorem~\cite{post1948degrees}.
  A language $A$ is $\Sigma_K$-complete, if $B\le_1 A$ for any language $B\in\Sigma_k$. The $k$-th Turing jump $\emptyset^{(k)}$ is $\Sigma_K$-complete.
  According to the Jump theorem~\cite[page 53, Th. 2.3.]{soare1987recursively}, the $\AH$ does not collapse relative to any oracle.
  
\subsection{Polynomial Hierarchy}
\begin{definition}
  If $A$ is a set and $n>0$, then
  \[ A^{\nomore{n}} \defgl \bigcup_{k\le n} A^k\text{.} \]
\end{definition}
The length of a word $x\in A^*$ is $|x|=\min\{n \mid x\in A^{\nomore{n}}\}$.
The word $1^n$ is the unary coding of $n$.


The polynomial hierarchy is $\PH = \bigcup_{k\ge 0} \Sigma_k^p =\bigcup_{k\ge 0}  \Pi_k^p$.

A language $L$ is in $\Sigma_k^p$ or $\Pi_k^p$ if there exists a relation $R_L$ computable in polynomial time, and polynomial $p$, such that
\[\begin{split} x \in L \Leftrightarrow Q_1y_1\in\B^{\le p(|x|)}\; \dots Q_{k}y_{k}\in\B^{\le p(|x|)}\;:\\ 
  R_L(x, y_1, \dots, y_k)\text{.}
\end{split}
\]
There are alternating existential and universal quantifiers,
\[\text{i.e.\ }
Q_i =\begin{cases} \exists \text{ if }i\text{ is odd} \\ \forall \text{ if }i\text{ is even} \end{cases}\text{ for }L\in\Sigma^p_k \text{ and}\]
\[Q_i =\begin{cases} \forall \text{ if }i\text{ is odd} \\ \exists \text{ if }i\text{ is even} \end{cases}\text{ for }L\in\Pi^p_k\text{.}\]

If $M$ is a TM, then the language of $M$ is $L(M)=\{x \mid M\text{ accepts }x\}$.
A set is c.e.\ if it is the language of a TM\@.

The following theorem is an exercise in some books, e.g. Homework 7.11 in~\cite[p. 166]{Homer2011}. 
\begin{theorem}\label{pattern}
  The set
  \[\begin{split}
      S_k = \{ \langle R , x, 1^{m}, 1^t\rangle \mid \\
      \exists y_1\in\B^{\nomore{m}} \forall y_2\B^{\nomore{m}} \dots Q_k y_k\in\B^{\nomore{m}} :\\
      R( x,y_1,y_2,\dots,y_k) \text{ and } R \text{ is computable within }
       t \text{ steps }\} 
    \end{split}
  \]
  is $\Sigma^p_k$-complete.
  \end{theorem}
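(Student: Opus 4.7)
The plan is to verify the two standard obligations for completeness: (i) $S_k \in \Sigma_k^p$, and (ii) every $L \in \Sigma_k^p$ reduces to $S_k$ in polynomial time (indeed via a many-one, even $\le_1$-style, reduction, which is what is needed to contribute to the paper's later use).

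For membership, I would read off the definition of $S_k$ directly. The instance $\langle R, x, 1^m, 1^t\rangle$ has length at least $m+t$, so both $m$ and $t$ are bounded by the input length. The $k$ alternating quantifiers in the definition of $S_k$ range over strings of length $\le m$, hence polynomially bounded in the input length, matching the quantifier shape required for $\Sigma_k^p$. The inner predicate ``$R$ halts within $t$ steps on $(x,y_1,\dots,y_k)$ and outputs $1$'' can be checked by a universal simulator that runs $R$ for at most $t$ steps; because $t$ is given in unary, this simulation is polynomial in the input length. Thus $S_k \in \Sigma_k^p$.

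For hardness, fix an arbitrary $L \in \Sigma_k^p$. By definition there is a polynomial-time Turing machine $R_L$ and a polynomial $p$ with
\[
x \in L \iff \exists y_1\in\B^{\le p(|x|)} \forall y_2\in\B^{\le p(|x|)} \dots Q_k y_k\in\B^{\le p(|x|)} : R_L(x,y_1,\dots,y_k).
\]
Let $q$ be a polynomial that bounds the running time of $R_L$ on all inputs of total length at most $|x|+k\cdot p(|x|)$; such a $q$ exists because $R_L$ runs in polynomial time. The reduction maps $x$ to
\[
f(x) \defeq \langle R_L,\, x,\, 1^{p(|x|)},\, 1^{q(|x|)}\rangle.
\]
The code of $R_L$ is a fixed constant string independent of $x$, and the two unary pads have polynomial length, so $f$ is computable in polynomial time. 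Comparing the definition of $S_k$ with the $\Sigma_k^p$-representation of $L$ gives $x \in L \iff f(x) \in S_k$, which is the desired reduction.

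The only mildly delicate step is the choice of $t = q(|x|)$: it must be large enough that $R_L$ actually terminates within $t$ steps on every possible witness assignment, yet small enough (polynomial in $|x|$) so that $1^t$ can be written by the reduction and still leaves the simulation in the membership argument polynomial. Both are automatic because $R_L$ is a polynomial-time machine on inputs of polynomial length, so I do not expect any real obstacle beyond bookkeeping.
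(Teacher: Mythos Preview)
Your proposal is correct and follows essentially the same approach as the paper: membership via the observation that the unary pads make $m,t$ polynomial in the input length and that the inner predicate is decidable by time-bounded simulation, and hardness via the reduction $x\mapsto\langle R_L,x,1^{p(|x|)},1^{q(|x|)}\rangle$. The paper's own proof is terser (it declares membership ``obvious'' and writes the same reduction without your explicit discussion of the running-time bound $q$), but the content is identical.
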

\begin{proof}
Obviously, $S_k$ is in $\Sigma_k^p$.
If the language $L$ is in $\Sigma_k^p$, then there exist polynomials $p$, $q$, and an $(n+1)$-ary relation $R$ with
\[\begin{split}
    L = \bigl\{ x \mid \exists y_1\in\B^{\le p(|x|)} \forall y_2\B^{\le p(|x|)} \dots Q_k y_k\in \B^{\le p(|x|)} :\\R(x,y_1,y_2,\dots,y_k) \text{ and }R\text{ is computable within }q(|x|)\text{ steps }\bigr\}.
  \end{split}
\]
So, for all $x$, $\langle R,x,1^{p(|x|)},1^{q(|x|)} \rangle \in S_k$.
This implies $L \le^p_m S_k$.
\end{proof}
\section{Black-Box Search and the $\AH$}\label{se:blackbox}

  \subsection{Oracles and computable Permutations}\label{se:group}

  A computable permutation is a computable bijective function from $\nat$ to \nat,
  or from $\B^*$ to $\B^*$.
The set of computable permutations is a group~\cite[page 73, Th. I]{rogers1967theory}.
So, the inverse of a computable permutation is also a computable permutation.

If $A\subseteq \B^*$ and $f$ is a computable permutation of $\B^*$, then
\[A_f \defeq \{ x \in \B^* \mid f(x)\in A \}. \]
It is obvious, that $A_f \equiv_T A$.
\begin{lemma}\label{swap}
  Let $A$ be a language.
  For every $x,y\in\B^*$ there exists a computable permutation $f$
  with $y=f(x)$.
  Thus
  \[
   x\in A \iff y\in A_f.
 \]
\end{lemma}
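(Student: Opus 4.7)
The plan is to exhibit an explicit computable permutation witnessing the lemma, namely the transposition that swaps $x$ and $y$ and fixes everything else. First I would handle the trivial case $x=y$ by taking $f$ to be the identity map on $\B^*$, which is clearly a computable bijection and gives $x\in A \iff x\in A_f$ by the definition of $A_f$.

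For $x\ne y$, I would define $f:\B^*\to\B^*$ by
\[
  f(z) \defeq \begin{cases} y & \text{if } z=x,\\ x & \text{if } z=y,\\ z & \text{otherwise.}\end{cases}
\]
Then I would verify the two requirements. Computability is immediate: on input $z$ a Turing machine compares $z$ to the two fixed strings $x$ and $y$ and outputs accordingly. Bijectivity follows because $f\circ f$ is the identity on $\B^*$, so $f$ is its own inverse and in particular injective and surjective. Hence $f$ is a computable permutation of $\B^*$ with $f(x)=y$.

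Finally, I would unfold the definition $A_f = \{ z\in\B^* \mid f(z)\in A\}$. Substituting $z=y$ and using $f(y)=x$, I get
\[
  y\in A_f \iff f(y)\in A \iff x\in A,
\]
which is the stated equivalence.

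The only step that requires any care is the bookkeeping when $x=y$ — this must be treated separately because the two-case swap would otherwise define $f(x)$ twice. Beyond that, the proof is entirely mechanical: construct the transposition, note that transpositions of fixed strings are trivially computable and self-inverse, and read off the conclusion from the definition of $A_f$. I do not expect any genuine obstacle; the lemma is really a convenient reformulation, ensuring later arguments may without loss of generality assume the query point is any prescribed $y$ by replacing the oracle $A$ with the Turing-equivalent $A_f$.
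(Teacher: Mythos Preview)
Your proposal is correct and follows exactly the same approach as the paper: exhibit the transposition swapping $x$ and $y$ and note that it is a computable permutation. You simply add more detail (the $x=y$ case, the self-inverse argument, and the explicit unfolding of $A_f$) than the paper's one-line proof.
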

\begin{proof}
   The permutation, that swaps $x$ and $y$, i.e.
   \[
     f(w)=
     \begin{cases}
       y & \text{if } w=x \\
       x & \text{if } w=y\\
       w & \text{otherwise,}
     \end{cases}
   \]
   is computable.
 \end{proof}
 
 \subsection{Black-Box Complexity with Oracles}\label{se:bbo}
 \newcommand{\orl}{\ensuremath{\tilde{R}_{L}}}
 \newcommand{\y}{\ensuremath{\overline{y}}}
 \begin{theorem}\label{blackbox}
   Let $k$ and $m$ be positive integers, $ k \ge 1$.
   Let $\tilde{R_L}$ be a $\Pi_{k}$-complete $2$-ary relation. An oracle
   Turing mashine (OTM)
   with an oracle for $\orl$, that decides whether
   $x\in L_m=\{ x \mid \exists y\in\B^{\le m} \orl(x,y)\}$
   needs an oracle query for every $y\in\B^{\le m}$ in the worst case.
 \end{theorem}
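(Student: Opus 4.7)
\emph{Proof proposal.} The plan is an indistinguishability / adversary argument exploiting both the $\Pi_k$-completeness of $\orl$ and the permutation Lemma~\ref{swap}. Assume toward contradiction that $M^{\orl}$ is an OTM deciding $L_m$ such that on every input $x$ the OTM queries its oracle on fewer than $|\B^{\nomore{m}}|$ pairs of the form $(x,y)$ with $y\in\B^{\nomore{m}}$. I would pick a convenient seed input $x_0$, simulate $M^{\orl}(x_0)$, and record both the set $Y_0\subsetneq\B^{\nomore{m}}$ of second coordinates actually queried and the oracle's responses on them. By the contradiction hypothesis some coordinate $y^*\in\B^{\nomore{m}}\setminus Y_0$ is never touched during this simulated run.

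Next I would manufacture a companion input $x^*$ at which $M$ must err. The $\Pi_k$-completeness of $\orl$ is what lets me realize, inside the graph of $\orl$, any prescribed finite ``local pattern'' of truth values on a chosen $x^*$-slice; Lemma~\ref{swap} then supplies the computable permutation needed to move a witness onto the unqueried coordinate $y^*$ (or off of it). Composing these two tools I would produce an $x^*$ whose slice $\orl(x^*,\cdot)$ agrees with $\orl(x_0,\cdot)$ on $Y_0$ but whose $L_m$-membership is flipped relative to that of $x_0$. Because the oracle answers on $Y_0$ are by construction identical, and responses to queries outside the $x^*$-slice come from the unchanged oracle $\orl$, the execution $M^{\orl}(x^*)$ follows the same computation trace as $M^{\orl}(x_0)$ and outputs the same verdict --- but exactly one of the two verdicts is wrong, contradicting correctness.

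The hardest step will be the realization of $x^*$ with the prescribed slice profile while leaving $M$'s queries outside the $x^*$-slice undisturbed. The subtlety is a mild circularity: $x^*$ must be produced before $M$ runs on it, yet its $\orl$-profile should be tailored to the trace of $M$ on $x^*$ itself. I expect to close this either by a recursion-theoretic fixed point inside the $\Pi_k$-complete structure of $\orl$, or by observing that Lemma~\ref{swap} provides a group of computable permutations rich enough that any finite query pattern can be permuted onto any other without changing the Turing degree of the oracle, so that the coordinate $y^*$ can be freely chosen after $Y_0$ is exposed.
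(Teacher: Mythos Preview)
Your adversary strategy diverges from the paper's argument in a structural way, and the divergence is where the gap lies.

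The paper does \emph{not} manufacture a second input. It keeps $x$ fixed and instead permutes the \emph{oracle}: for each $\bar y\in\B^{\le m}$ the computable transposition $f_{\bar y}$ of Lemma~\ref{swap} produces the relation $\orl\circ f_{\bar y}$, which is Turing-equivalent to $\orl$ but has the witness sitting at $(x,\bar y)$ rather than at $(x,\hat y)$. Because the permuted oracle has the same Turing degree, any procedure that works with oracle $\orl$ must cope with the witness lying at an arbitrary coordinate, and the black-box lower bound is read off directly. No companion input, no fixed point, no slice-pattern realization is invoked.

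Your route --- fix the oracle, vary the input --- runs into the circularity you yourself flag, and it is not mild. The set $Y_0$ is extracted from the run $M^{\orl}(x_0)$, but the computation $M^{\orl}(x^*)$ depends on $x^*$ both through the oracle answers on the $x^*$-slice \emph{and} through the machine reading its own input tape. There is no reason the two runs issue the same second-coordinate queries, so the claim that ``$M^{\orl}(x^*)$ follows the same computation trace as $M^{\orl}(x_0)$'' is unsupported once $x^*\neq x_0$. The recursion-theoretic fixed point does not obviously close this: you would need an $x^*$ whose $\orl$-slice is tailored to the run of $M$ on $x^*$, yet $\orl$ is a fixed $\Pi_k$-complete set you cannot edit, and $\Pi_k$-completeness by itself does not guarantee that every finite $\{0,1\}$-pattern on $\B^{\le m}$ occurs as some $x$-slice. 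Your second proposed rescue --- permute the oracle so that $y^*$ can be chosen after $Y_0$ is exposed --- is essentially the paper's move, used there not as a patch on an input-swapping argument but as the entire proof, with the input held constant throughout.
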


 \begin{proof}
   The relation $\orl$ is not computable.
   We assume that we would have an \orl-oracle to search for
   a $y$ with $\orl(x,y)$.
   In this case, we should use black-box search.
   If there is a $\hat{y}\in\B^{\le m}$ with $\orl(x,\hat{y})$, then
   by Lemma~\ref{swap}
   we can construct for each $\y\in\B^{\le m}$ a computable permutation
   $f_{\y}$ with $f_{\y}(\y)=\hat{y}$. The language
   \[ L_{m,f_{\y}} =\{ x \mid \exists y\inb[\le m]\; (\orl\circ f_{\y})(x,y)  \} := \{ x \mid \exists y\inb[\le m]\; \orl(x,f_{\y}(y)) \}\]
   is $(\orl\circ f_{\y} )$-computable, because $(\orl\circ f_{\y}) \equiv_T \orl$. The search result is now at $(x,\y)$.
   So, we have to use the oracle $\tilde{R_L}$ on any
   $y$ with a length less than $m$ in the worst case.
 \end{proof}

 We need a black-box search to decide with the oracle $\orl$
 whether an input word is in
 $L_m =\{ x \mid \exists y\in\B^{\le m} \; \orl(x,y) \}$.

The cardinality of $\bigl(\B^{\le m}\bigr)$ is $N=2^{m+1} - 1$.
If there is only one $y\in\B^{\le m}$ with $\tilde{R}_L(x,y)$, then the black-box complexity,
i.e. the expected number of oracle queries,
is $N/2 +1/2 = 2^m $~\cite[Th. 9.3.1]{wegener2005complexity}.

\section{The Relationship between $\PH$ and $\AH$}\label{sec:AHPH}
\newcommand{\qsizeone}[1][X]{\ensuremath{\mu(#1;k,R_L)}}
\newcommand{\qsizeonepp}[1][X]{\ensuremath{\mu(#1;k+1,R_L)}}
A set of $\AH$ can be transformed into a problem of $\PH$ by considering only a finite set of input words.
For a language $L$ in $\Sigma_k$ we use very large padding arguments to
construct a language $L'$ in $\Sigma_k^p$. For the finite set of input,s
a word is in $L$ ,if and only if the word is in $L'$.
\begin{lemma}\label{finite}
  Let $k$ be a non-negative integer.
  Let $L$ be a language in $\AH$ with c.e.\ or co-c.e.\
  $k$-ary
  relation $R_L$,
  such that
  \[  L = \{ x \mid  Q_1y_1\in\B^{*}\dots Q_{k}y_k\in\B^{*}\;
    R_L(x ,y_1 ,\dots ,y_k) \}
\]
where $Q_1,\dots, Q_k$ are existential or universal quantifiers.
If the set of words $S$ is finite, then
there exist
positive integers $M$ and $T$ with
\[ x\in S \Rightarrow
  [ x \in L \iff  Q_1y_1\in\B^{\nomore{M}}\dots Q_{k}y_k\in\B^{\nomore{M}}
    R_L(x ,y_1 ,\dots ,y_k)
  ]
\]
and $R_L$ is computable within $T$ steps on $\bigl(S \times (    \B^{\le M}  )^k\bigr)$.

We denote $m=\qsizeone$.
\end{lemma}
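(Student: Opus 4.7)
My plan is to bound the quantifiers one at a time, from the outermost inwards, by an induction on $k$ that iterates Theorem~\ref{limit}; I first flatten the c.e./co-c.e.\ kernel to a computable one so the theorem applies directly.

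First I would eliminate the c.e.\ (respectively co-c.e.) nature of $R_L$ using the device from the proof of Lemma~\ref{ce-rel}: if $R_L$ is c.e., write $R_L(x,y_1,\dots,y_k)\iff\exists t\,R_t(x,y_1,\dots,y_k)$ with $R_t$ computable, and absorb the new $\exists t$ into the innermost existential quantifier (which, by hypothesis, is $Q_k$); the co-c.e.\ case is symmetric, absorbing $\forall t$ into $Q_k=\forall$. This yields an equivalent $k$-quantifier formula with a \emph{computable} kernel $R'_L$, and the language sits in $\Sigma_k$ or $\Pi_k$ accordingly.

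Next I would induct on $k$. The base case $k=0$ is trivial: $L$ is computable, so set $M=0$ and let $T$ be the maximum running time of $R'_L$ on the finite input $S$. For the inductive step, assume without loss of generality that $L\in\Sigma_k$ (otherwise work with $\overline{L}$). Theorem~\ref{limit} supplies a bound $M_1$ such that for every $x\in S$,
\[
  x\in L \iff \exists y_1\in\B^{\le M_1}\;\forall y_2\in\B^*\,\dots\,Q_k y_k\in\B^*\; R'_L(x,y_1,\dots,y_k).
\]
Treat $(x,y_1)$ as a single input varying over the finite set $S'=S\times\B^{\le M_1}$; the remaining $(k-1)$-quantifier formula defines a $\Pi_{k-1}$ language on $S'$. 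By the inductive hypothesis there exist $M'$ and $T$ with the required properties for this $\Pi_{k-1}$ language on the finite input set $S'$. Setting $M=\max(M_1,M')$ then works uniformly for all $k$ quantifiers, since enlarging the truncation range only provides more existential witnesses and more universal obligations, preserving both directions of each per-quantifier equivalence under $\B^{\le M_1}\subseteq\B^{\le M}\subseteq\B^*$. Since the domain $S\times(\B^{\le M})^k$ is finite and $R'_L$ is computable, the time bound on $S'\times(\B^{\le M})^{k-1}$ already covers $R'_L$ on $S\times(\B^{\le M})^k$ (possibly after taking the maximum with the finite runtime on the extra inputs).

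The main obstacle is the bookkeeping between $\Sigma$ and $\Pi$ across the induction: after truncating $Q_1$, the residual prefix begins with the opposite polarity, so the recursive invocation requires a $\Pi_{k-1}$-version of Theorem~\ref{limit}. This is routine by dualization\,---\,complement the language, complement each quantifier, apply Theorem~\ref{limit}, then complement back\,---\,but must be done carefully so that the bound $M'$ delivered on the complementary $\Sigma_{k-1}$ formulation is interpreted as a bound on the original $\Pi_{k-1}$ formulation. A secondary, essentially notational, point is that $M_1$ depends on the finite set $S$, while at the next stage the finite input set grows to $S\times\B^{\le M_1}$; because $M'$ is chosen after $M_1$ is fixed, there is no circularity.
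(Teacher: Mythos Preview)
Your proposal follows essentially the same route as the paper: induct on $k$, assume w.l.o.g.\ the outermost quantifier is existential (else pass to the complement), bound it on the finite input set (the paper does this directly rather than citing Theorem~\ref{limit}), recurse on the enlarged finite set $S\times\B^{\le M_1}$, and take $M$ as the maximum of the two bounds. The only cosmetic difference is that you flatten the c.e./co-c.e.\ kernel to a computable one up front by absorbing the stage quantifier into $Q_k$, whereas the paper carries the c.e./co-c.e.\ relation through the induction and invokes the Limit Lemma in the base case (and again at the end of the inductive step) to extract the time bound $T$.
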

  \begin{proof}
    We prove this lemma by induction over $k$:
    
      First, we prove the base case $k=0$.
      The relation
      $R_L$ is c.e.\ or co-c.e.
      There is a series of computable unary relations $R_{L,t}$
      with $\lim_t R_{L,t} =R_L $ by the Limit Lemma~\cite{limitlemma}.
      W.l.o.g.\ we can assume, that $R_{L,t}$ is computable within $t$ steps.
      The set $S$
      is finite. Thus,
      for some $T\in\nat$ is $R_{L,T}(x) \iff R_L(x)$ for $x\in S$.
      
      In the induction step, we define the language $L$ with $(k+1)$
      quantifiers and a c.e.\ or co-c.e.\ $(k+2)$-ary relation:
      \[\begin{split}
        L := \{ x \mid Q_1 y_1 \in \B^*
        Q_2 y_2 \in \B^* \dots Q_{k+1}y_{k+1} \in \B^* \\
        R_L(x,y_1,y_2,\dots y_{k+1}) \}
      \end{split}
    \]
    W.l.o.g.\ the quantifier $Q_1$ is existential. Otherwise, we analyze
    $\overline{L}$.

    For convenience, we define the set
    \[
      A := \{ (x,y) \mid Q_2 y_2 \in \B^* \dots Q_{k+1}y_{k+1}\in \B^*\;
      R_L(x,y,y_2,\dots,y_{k+1}) \} \text{.}
    \]
    Obviously,
    \[ x\in L \iff \exists y\in\B^*\; (x,y)\in A \text{.}  \]
    For a finite set of words $S$, there exists an $m_S$ with
    \[
      x\in S \Rightarrow [ x\in L \iff \exists y\in\B^{\le m_S} \text{.}  ]
    \]

    The set $\bigl( S \times \B^{\le m_S} \bigr)$ is finite, and
    $A$ is in $\Pi_k$.
    By the induction hypothesis we get
    some positive integers $m_A$ and $t$ with 
    \[ (x,y)\in A \iff Q_2y_2\in\B^{\le m_A}\dots Q_{k+1}y_{k+1}\in\B^{\le m_A} R_L(x,y,y_2,\dots,y_{k+1}) \]
    and $R_L$ computable in at most $t$ steps
    for $(x,y)\in\bigl( S \times \B^{\le m_S} \bigr)$.

    If we choose
    $M := \qsizeonepp = \max(m_S,m_A)$, we get
    for $x\in S$:
    \[\begin{split}
      \bigl[
      x\in L \iff
       Q_1 y_1 \in \B^{\le M}
        Q_2 y_2 \in \B^{\le M} \dots Q_{k+1}y_{k+1} \in \B^{\le M}\\
        R_L(x,y_1,y_2,\dots y_{k+1})\text{.}
        \bigr]
      \end{split}
    \]
    The set $ S \times (\B^{\le M})^k$ is finite. Thus, there is an integer $T$
    with \[
      \begin{split}(x,y_1,y_2,\dots,y_{k+1}) \in S \times (\B^{\le M})^k
        \Longrightarrow \\
       \bigl [ R_L(x,y_1,y_2,\dots,y_{k+1}) \iff R_{L,T}(x,y_1,y_2,\dots,y_{k+1})\bigr]
      \end{split}
    \] with $\lim_t R_{L,t} = R_L$ and $R_{L,t}$ is computable in $t$ steps.
  \end{proof}
  \nocite{ClassRecTh}
  The arithmetical hierarchy does not collapse~\cite{KleeneAH},~\cite{Mostowski1947}. According to Post's theorem~\cite{post1948degrees},
  $\Sigma_{k+1}$-complete problems have a higher Turing degree than
  $\Sigma_k$-complete problems, i.e.
  if $A$ is $\Sigma_{k+1}$-complete and $B$ is $\Sigma_{k}$-complete,
  then $A\not\le_T B$.
  This implies, that the length $\qsizeone$ of the quantified variables in Lemma~\ref{finite} can grow faster than any $\Sigma_{k-1}$-computable function.

 \section{The $\PH$ does not collapse}\label{nocollapse}

 To show that the $\PH$ does not collapse,
 we prove
 that for a language in $\Sigma_{k+1}^P$ the runtime
 is exponential relative to $\Sigma_{k}^P$.
 \begin{lemma}\label{main}
   Let $k$ be a positive integer.
            If one computes a language in $\Sigma^p_{k+1}$ with a $\Pi^p_{k}$-oracle,
            then in the worst case one needs exponentially many oracle  queries.
          \end{lemma}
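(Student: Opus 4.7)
\emph{Plan.} The strategy is to combine the padding construction of Lemma~\ref{finite} with the black-box lower bound of Theorem~\ref{blackbox}, transplanting a hard $\Sigma_{k+1}$ problem from the arithmetical hierarchy into $\Sigma^p_{k+1}$.

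I would first fix a $\Sigma_{k+1}$-complete language $L=\{x\mid\exists y\in\B^{*}\;\tilde R_{(L,k+1)}(x,y)\}$, where the witness relation is $\Pi_k$-complete by Post's theorem. For each input size $n$, apply Lemma~\ref{finite} to $X_n=\B^{\le n}$ to obtain a padding length $M_n$ and a time bound $T_n$ such that, on $X_n$, membership in $L$ is equivalent to the bounded $(k+1)$-quantifier formula with all variables ranging over $\B^{\le M_n}$ and $R_L$ replaced by its time-$T_n$-computable approximation. Then form padded instances $x'=\langle x,1^{M_n},1^{T_n}\rangle$ of length $N_n$ polynomial in $n+M_n+T_n$; the resulting padded language $L'$ lies in $\Sigma^p_{k+1}$ by construction, since the bounded quantifiers range over $\B^{\le M_n}\subseteq\B^{\le N_n}$ and the inner relation is decidable in time $T_n\le N_n$.

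Next, I would argue that any $\Pi^p_k$ oracle $A$ available to an OTM deciding $L'$ yields no more useful information than an oracle for the restricted witness relation $\tilde R_{(L,k+1)}$ on $X_n\times\B^{\le M_n}$. By Theorem~\ref{pattern} and its $\Pi^p_k$ analogue, every query to $A$ reduces in polynomial time to a query of the canonical pattern language, which on inputs encoded from $X_n\times\B^{\le M_n}$ coincides with the padded witness relation. Applying Theorem~\ref{blackbox} with $m=M_n$, every OTM deciding $x'\in L'$ using such an oracle must, in the worst case, make at least $2^{M_n}$ oracle queries; the swap-permutation argument survives this reduction because the required swap on $\B^{\le M_n}$ is polynomial-time computable.

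Finally, by the remark following Lemma~\ref{finite}, the function $n\mapsto M_n$ grows faster than any $\Sigma_{k-1}$-computable function, in particular faster than any polynomial in $n$ and hence in $N_n$, so $2^{M_n}$ is super-polynomial, and hence exponential, in the padded input length $N_n$. The main obstacle I expect is the middle step: precisely verifying that an arbitrary $\Pi^p_k$ oracle cannot bypass the black-box barrier of Theorem~\ref{blackbox}. This reduces to careful bookkeeping through the $\Sigma^p_k$-completeness theorem, showing that polynomial-time OTM queries to any $\Pi^p_k$ language ultimately only reveal information about the $\Pi_k$-complete witness $\tilde R_{(L,k+1)}$, so the swap-permutation lower bound continues to apply in the $\PH$ setting.
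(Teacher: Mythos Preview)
Your outline follows essentially the same route as the paper: fix a $\Sigma_{k+1}$-complete $L$ whose witness relation $\tilde R_L$ is $\Pi_k$-complete, pad via Lemma~\ref{finite} into a $\Sigma^p_{k+1}$ language $A_P$ with companion $\Pi^p_k$ oracle $\tilde A_P$, invoke Theorem~\ref{blackbox} for the $2^M$ black-box lower bound, and appeal to the super-computable growth of $M$. The only structural difference is cosmetic---you parametrize by input length $n$ where the paper works with a generic finite set $S$.

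Where the two diverge is precisely at the obstacle you flag. You stop at ``careful bookkeeping through the $\Sigma^p_k$-completeness theorem'' and admit this is the crux. The paper attempts to close the gap with two further heuristic arguments you do not have: first, that any ``faster'' oracle built via linear speedup can only batch computably many queries, while $M$ outgrows every computable function; second, that if small $M,T$ permitted a faster test, this would furnish a decision procedure for the property in Equation~\ref{big_MT}, which is not $\Pi_k$-computable. Neither of these is more rigorous than your sketch---an arbitrary $\Pi^p_k$ oracle is not a priori restricted to answering individual witness queries, and the completeness of $\tilde A_P$ does not by itself force every $\Pi^p_k$ query to factor through $\tilde R_L$---but they are the paper's actual content at this step, so your write-up should at least reproduce them rather than leave the reduction as ``bookkeeping.''
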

          \begin{proof}
            Let \[
         L= \{ x \mid \exists y_1\in\B^*\; \forall y_2\in\B^* \dots Q_{k+1}y_{k+1}\in\B^* \; R_L(x,y_1,y_2,\dots,y_{k+1}) \}
       \] be a $\Sigma_{k+1}$-complete language, such that the relation
       \[ \orl = \{ (x,y) \mid \forall y_2\in\B^* \dots Q_{k+1}y_{k+1}\in\B^* \; R_L(x,y,y_2,\dots,y_{k+1}) \}
         \}
       \] is $\Pi_k$-complete.

       By Lemma \ref{ce-rel}, the language $L$ is in $\Sigma_{k+1}$, if
        the last quantifier $Q_{K+1}$ is existential (universal)
       and the relation $R_L$
       is c.e. (co-c.e.).
       Now, there is $L = \{ x \mid \exists y\in\B^* \orl(x,y)\} $.
     We define \[L_m \defeq \{ x \mid \exists y\in\B^{\le m} \orl(x,y)\} .\]
   Theorem \ref{blackbox} implies that one needs a black-box search
   to decide with oracle $\orl$, whether $x\in L_m$.
   Thus, one needs $\theta(2^m)$ queries to the oracle $\orl$.

   \newcommand{\rlt}[1][t]{\ensuremath{R_{(L,#1)}}}
   If the relation $R_L$ is c.e.\ or co-c.e., then there exists
   a sequence of computable relations $\rlt$ with
   \[ \lim_t \rlt = R_L\] by the limit lemma~\cite{limitlemma}.

   \newcommand{\AAP}{\ensuremath{A_P}}
   \newcommand{\oAAP}{\ensuremath{\tilde{A}_P}}

   Similar to Theorem~\ref{pattern}, we can construct a set
     \begin{equation} 
       \begin{split}
    \AAP \defeq \{ \langle x, 1^m, 1^t  \rangle \mid
    \exists y_1\in\B^{\le m}
    \forall y_1\in\B^{\le m} \dots Q_{k+1} y_{k+1}\in\B^{\le m}\\
    \rlt(x,y_1,y_2,\dots,y_{k+1})
    \}
  \end{split}
\end{equation}

   in $\Sigma_{k+1}^P$. The related $\Pi_{k}^P$-oracle is
  \[\begin{split}
    \oAAP \defeq \{ \langle x, y, 1^m, 1^t  \rangle \mid
    \forall y_2\in\B^{\le m} \dots Q_{k+1} y_{k+1}\in\B^{\le m}\\
    \rlt(x,y,y_2,\dots,y_{k+1})
    \}.
  \end{split}
\]

   If the set $S$ of input words is finite, then there exists an $ M$ 
   and $T$ 
   with
   \begin{equation}\label{big_MT}
     x\in S \Rightarrow [
         x \in L \iff \langle x, 1^M , 1^T \rangle \in \AAP
      ]
    \end{equation}
    
   by Lemma \ref{finite}.
   Obviously,
   \[ \orl(x,y) \iff \langle x, y, 1^M , 1^T \rangle \in \oAAP \]
   for $x\in S$ and $y\in\B^{\le M}$.

   One would like to test whether an $x\in S$ is in
   $L_M =\{x \mid \exists y\in\B^{\le M}\;\orl(x,y)\}$ with an oracle.
   There is not less runtime with oracle $\oAAP$ than with oracle $\orl$,
   because
   $\orl(x,y) \iff \langle x, y, 1^M , 1^T \rangle \in \oAAP $
   for $(x,y) \in S \times \B^{\le M}$.

   So, we need a black-box search to test, whether $x\in S$ is in $L_M$ with the
   oracle $\oAAP$. By Theorem~\ref{pattern}
   the oracle $\oAAP$ is in $\Pi_{k}^P$.

 \newcommand{\ora}{\ensuremath{\mathcal{R}}}
  
 With the oracle $\orl$, we need a black-box search.
 Now, we have to answer the question
 whether there is an oracle that is faster.
 We can construct such an oracle using the linear speed-up theorem,
 which tests many input words at once.
 But the possible speed-up is limited by a computable function,
 and the parameter $M$ in Equation~\ref{big_MT} can grow faster
 than any computable function.
 So, there are also exponentially many queries with the faster oracle.
   
  Perhaps the numbers $M$ and $T$ are not great enough to fulfil
  the property in
  Equation~\ref{big_MT}. If we could test faster with oracle $\oAAP$
  in this case,
  whether an input is
  in $L_M$,
  then we would have a criterion to decide the property of
  Equation~\ref{big_MT}.
  But this is neither computable nor $\Pi_{k}$-computable.
  Thus, we need also black-box search for small bounds $M$ and $T$.
\end{proof}

          According to Lemma~\ref{main}, we need exponential runtime to compute
          a $\Sigma_{k+1}^P$-complete language with a $\Pi_{k}^P$-oracle.
          So, $\Sigma_{k+1}^P \not\le_T^P \Pi_{k}^P$ for every integer $k>0$. 
          
          \section{Notes on the Relativization Barrier}\label{se:rela}
          The polynomial hierarchy does not collapse. This implies
          $\P\not=\NP$. The proof has to circumvent the
          relativization barrier~\cite{baker1975relativizations}.
          In this paper, we use polynomial-long oracle queries.
          Aaronson and Wigderson gave an example~\cite{algebrization},
          where this proof technique fails to relativize.
          So, this proof does not relativize.


\bibliography{lit}{}
\bibliographystyle{plain}
\appendix
\section{Computability Theory in a Nutshell}
A language is a subset of $\omega$, where $\omega$ is an infinitely countable set
such as $\nat$ or $\B^*$.

A language $A$ is $B$-computable, if an OTM with an oracle tape for the set
$B$ could decide, whether an input is in $A$ or not.
The language $A$ is c.e.\ in $B$, when there is an OTM with oracle
$B$, that accepts all words in $A$. Despite to $B$-computable sets, maybe the OTM
does not terminate for words not in $A$.

For languages $A, B$ we denote $A\le_T B$, if $A$ is $B$-computable.
If there exists a computable injective function $f$ with $ x\in A \Rightarrow f(x) \in B$,
then $A\le_1 B$.
For $r\in\{1,T\}$ we denote $A\equiv_r B$, if $A\le_r B$ and $B\le_r A$.
If $A\le_r B$, but $B\not\le_r A$, then $A<_r B$.
Obviously, $A\le_1 B \Rightarrow A\le_T B$.

The Turing jump of a language $A$ is
\[ A' \defeq \{ e | e \in W_e^A \} \]
The set $W_e^A$ contains all input word, that would be
accepted by the OTM with oracle $A$ with the coding $e$. We denote this coding Gödel number.

The $n$-th Turing jump of $A$ is
\[ A^{(n)} = \begin{cases} A & \text{ if }n=0\\
    (A^{(n-1)})' \text{ otherwise}
  \end{cases}
\]
A set $A$ is computable, if $A\equiv_T \emptyset$ So, $A$ is c.e.,if and only if
$A\le_T \emptyset^{(1)}$. A set $A$ is in $\Sigma_n$, if and only if $A\le_1\emptyset^{(n)}$.

The jump theorem~\cite[page 53, Th. 2.3.]{soare1987recursively} is
a collection of some properties of the Turing jump:
\begin{theorem}
  For all $A,B,C \subseteq \omega$
   \begin{enumerate}
   \item $A'$ is c.e.\ in $A$\\
   \item $\A' \not\le_T A$ \\
   \item If $A$ c.e.\ in $B$ and $B\le_T C$ then $A$ is c.e.\ in $C$ \\
   \item $B \le_T A$ if and only if $B'\le_1 A'$\\
   \item If $B\equiv_T A$ then $B'\equiv_1 A'$ \\
   \item A is c.e.\ in $B$, if and only if A is c.e.\ in $\overline{B}$ \\ 
     
   \end{enumerate}
 \end{theorem}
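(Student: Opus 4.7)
The plan is to prove the six statements in the order (1), (2), (6), (3), (4), (5), since the later parts build on the earlier ones. For (1), I observe that an OTM with oracle $A$ can dovetail the simulation of each $e$-th OTM on input $e$ (with oracle $A$), enumerating $e$ onto its output tape exactly when the simulation accepts; this enumerates $A' = \{e \mid e \in W_e^A\}$, so $A'$ is c.e.\ in $A$. For (2), I diagonalize: if some OTM with oracle $A$ decided $A'$, I would build a machine $M$ that, on input $e$, uses this decider to test whether $e \in A'$ and then accepts iff the decider says $e \notin A'$; letting $e_0$ be the G\"odel number of $M$, the construction forces $e_0 \in A' \iff e_0 \notin A'$, a contradiction. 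For (6), any OTM with oracle $B$ can be converted into one with oracle $\overline{B}$ by flipping the answer of every oracle query, so the two notions of c.e.\ coincide.

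For (3), I compose machines: if $M$ enumerates $A$ using oracle $B$ and $N$ decides $B$ using oracle $C$, I replace every $B$-query of $M$ by a subroutine call to $N^{C}$. The resulting OTM with oracle $C$ accepts precisely $A$, so $A$ is c.e.\ in $C$.

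Statement (4) is the heart of the theorem and is handled in two directions. For the forward direction, assume $B \le_T A$ via an OTM $N$. By the $s$-$m$-$n$ theorem I obtain a computable, and via padding injective, map $f$ such that for every index $e$, $W_{f(e)}^A$ equals the set computed by simulating the $e$-th OTM with oracle $B$ while answering every $B$-query by running $N^A$. A standard trick arranges $f$ so that $f(e) \in A' \iff e \in B'$, giving $B' \le_1 A'$. For the backward direction, assume $B' \le_1 A'$. Since $B$ is c.e.\ in $B$ trivially, and $\overline{B}$ is c.e.\ in $B$ as well, I use the equivalence ``$X$ is c.e.\ in $Y$ iff $X \le_1 Y'$'' with $Y = B$ to obtain $B, \overline{B} \le_1 B' \le_1 A'$; the same equivalence applied with $Y = A$ shows that $B$ and $\overline{B}$ are both c.e.\ in $A$, whence $B \le_T A$. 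Finally, (5) is immediate from (4) applied to $B \le_T A$ and $A \le_T B$.

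The main obstacle is the backward direction of (4), which is the only step not reducible to a direct machine construction or a cheap diagonalization. It rests on the equivalence between being c.e.\ in $Y$ and being $\le_1$-reducible to $Y'$. I would either isolate this equivalence as a preliminary lemma, proved by an explicit $s$-$m$-$n$ construction in one direction and by witnessing membership in $Y'$ with a halting c.e.\ computation in the other, or prove both containments inline, being careful that the reductions are injective so that $\le_1$, rather than merely $\le_m$ or $\le_T$, is obtained.
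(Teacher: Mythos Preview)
The paper does not actually supply a proof of this theorem; it is stated in the appendix purely as background, with the proof deferred to Soare~[page~53, Th.~2.3]. Your proposal is correct and is essentially the standard argument found there: the ordering $(1),(2),(6),(3),(4),(5)$ is the natural dependency order, the diagonalization for~(2), the oracle-answer flip for~(6), and the machine composition for~(3) are the usual constructions, and your handling of~(4)---$s$-$m$-$n$ with padding for injectivity in the forward direction, and the auxiliary equivalence ``$X$ is c.e.\ in $Y$ iff $X\le_1 Y'$'' together with relativized complementation in the backward direction---is exactly the classical route. You are right to isolate that auxiliary equivalence as a separate lemma, since it is not among the six listed items; once it is established, the backward half of~(4) and then~(5) follow as you say.
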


 A god overview of Computability theory is in \cite{compweber}.

\end{document}